\def\nn{\nonumber}
\newcommand{\mbf}[1]{{\boldsymbol {#1} }}
\def\ii{{\,{\rm i}\,}}
\def\dd{{\rm d}}
\newcommand{\unit}{\mathbbm{1}}   			
\newcommand{\eq}{\begin{equation}}
\newcommand{\eqend}{\end{equation}}
\newcommand{\eqa}{\begin{eqnarray}}
\newcommand{\nonueqa}{\begin{eqnarray*}}
\newcommand{\eqaend}{\end{eqnarray}}
\newcommand{\nonueqaend}{\end{eqnarray*}}
\newcommand{\bma}[1]{\begin{array}{#1}}
\newcommand{\ema}{\end{array}}
\newcommand{\bc}{\begin{center}}
\newcommand{\ec}{\end{center}}
\newcommand{\newsection}{\setcounter{equation}{0}\section}
\newcommand{\complex}{{\mathbb C}} 
\newcommand{\real}{{\mathbb R}} 
\newcommand{\quat}{{\mathbb H}} 
\newif\ifold             \oldtrue
\def\nn{\nonumber}
\def\be{\begin{equation}}
\def\ee{\end{equation}}
\def\bea{\begin{eqnarray}}
\def\eea{\end{eqnarray}}
\def\bd{\begin{displaymath}}
\def\ed{\end{displaymath}}
\newcommand{\beq}{\begin{eqnarray}}
\newcommand{\eeq}{\end{eqnarray}}
\newdimen\normalarrayskip              
\newdimen\minarrayskip                 
\newif\ifold             \oldtrue            
\def\arraymode{\ifold\relax\else\displaystyle\fi} 
\def\@arrayskip{\ifold\baselineskip\z@\lineskip\z@
     \else
     \baselineskip\minarrayskip\lineskip2\minarrayskip\fi}
\def\@arrayclassz{\ifcase \@lastchclass \@acolampacol \or
\@ampacol \or \or \or \@addamp \or
   \@acolampacol \or \@firstampfalse \@acol \fi
\edef\@preamble{\@preamble
  \ifcase \@chnum
     \hfil$\relax\arraymode\@sharp$\hfil
     \or $\relax\arraymode\@sharp$\hfil
     \or \hfil$\relax\arraymode\@sharp$\fi}}
\def\@array[#1]#2{\setbox\@arstrutbox=\hbox{\vrule
     height\arraystretch \ht\strutbox
     depth\arraystretch \dp\strutbox
     width\z@}\@mkpream{#2}\edef\@preamble{\halign \noexpand\@halignto
\bgroup \tabskip\z@ \@arstrut \@preamble \tabskip\z@ \cr}%
\let\@startpbox\@@startpbox \let\@endpbox\@@endpbox
  \if #1t\vtop \else \if#1b\vbox \else \vcenter \fi\fi
  \bgroup \let\par\relax
  \let\@sharp##\let\protect\relax
  \@arrayskip\@preamble}
\def\be{\beta}
\newtheorem{lemma}[equation]{Lemma}
\theoremstyle{definition}
\begin{document}
\begin{titlepage}

\begin{flushright}
\small
\baselineskip=12pt
MPP-2018-111
\end{flushright}
\normalsize

\begin{center}

\vspace{1cm}

\baselineskip=24pt

{\Large\bf Recurrence relations for symplectic realization of\\ (quasi)-Poisson structures}

\baselineskip=14pt

\vspace{1cm}

{\bf Vladislav G. Kupriyanov}
\\[5mm]
\noindent  {\it Max-Planck-Institut f\"ur Physik,
  Werner-Heisenberg-Institut\\ F\"ohringer Ring 6, 80805 M\"unchen, Germany
}
\\ and {\it CMCC-Universidade Federal do ABC, Santo Andr\'e, SP, 
Brazil}\\ and {\it 
Tomsk State University, Tomsk, Russia}
\\
Email: \ {\tt
    vladislav.kupriyanov@gmail.com}
\\[30mm]

\end{center}

\begin{abstract}
\baselineskip=12pt
\noindent
It is known that any Poisson manifold can be embedded into a bigger space which admites a description in terms of the canonical Poisson structure, i.e., Darboux coordinates. Such a procedure is known as a symplectic realization and has a number of important applications like quantization of the original Poisson manifold. In the present paper we extend the above idea to the case of quasi-Poisson structures which should not necessarily satisfy the Jacobi identity. For any given quasi-Poisson structure $\Theta$ we provide a recursive procedure of the construction of a symplectic manifold, as well as the corresponding expression in the Darboux coordinates, which we look in form of the generalized Bopp shift. Our construction is illustrated on the exemples of the constant $R$-flux algebra, quasi-Poisson structure isomorphic to the commutator algebra of imaginary octonions and the non-geometric M-theory $R$-flux backgrounds. In all cases we derive explicit formulae for the symplectic realization and the generalized Bopp shift. We also discuss possible applications of the obtained mathematical structures.

\end{abstract}

\end{titlepage}

\section{Introduction}

\emph{Symplectic realizations} are an important mathematical tool for investigation of Poisson and \emph{quasi}-Poisson manifolds with a number of important applications, from the study of classical dynamics \cite{KS18} and the algebra of symmetries \cite{Gracia-Bondia:2017fai}, to a quantization, see, e.g., \cite{KS18,Hammou:2001cc,GraciaBondia:2001ct,Pachol:2015qia}. To be more precise let us recall that a \emph{symplectic realization} of a Poisson structure $\omega$ on a
manifold $M$ is a symplectic manifold $(S,\Omega)$ together with a
surjective submersion ${\sf p}:S\to M$ which preserves the Poisson
structures: ${\sf p}_*\,\Omega^{-1} =\omega$. It is a fundamental
result in Poisson geometry that any Poisson manifold admits a
symplectic realization. The original local construction for
$M=\real^d$ is due to~\cite{Weinstein-local}; it proceeds by taking
$S=T^*M$ to be the phase space of $M$, with the canonical projection
${\sf p}:T^*M\to M$, and $\Omega$ to be the integrated pullback of the
canonical symplectic structure $\dd p_i\wedge\dd x^i$ on $T^*M$ by the flow of the vector field $\omega^{ij}(x)\, p_i\, \partial_j$, where $(x,p)\in T^*M=\real^d\times(\real^d)^*$. The early global constructions based on integrating symplectic groupoids are due to~\cite{Karasev,Weinstein-groupoid,Weinstein}. The extension to almost symplectic realizations of twisted Poisson structures is established globally by~\cite{CattaneoXu}.

In the previous work \cite{KS18} we have proposed the symplectic realization for the monopole algebra, in which the bracket of the covariant momenta is proportional to the magnetic field, $\{\pi_i,\pi_j\}=e\, \varepsilon_{ijk} \, B^k(\vec x\,)$, without requiring, $\vec\nabla\cdot\vec B=0$. The obtained formulation was used for studying the classical dynamics and quantization of the electric charge in a field of monopole distributions. The aim of the present paper is to extend the construction of \cite{KS18} to the case of an arbitrary quasi-Poisson structure.

Given an arbitrary bi-vector $\Theta=\frac12\, \Theta^{ij}(x) \, \partial_i\wedge\partial_j$ on a manifold $M$ of dimension $d$, the algebra of \emph{quasi-Poisson brackets}
\begin{equation}
\{x^i,x^j\}_Q=\alpha\, \Theta^{ij}(x) \ ,
\end{equation}
for local coordinates $x\in \real^d$ and a deformation parameter $\alpha\in\real$, is bilinear and antisymmetric but does not necessarily satisfy the Jacobi identity, i.e., in general 
\bea
\Pi^{ijk} = \mbox{$\frac{1}3$} \, \big(
\Theta^{il}\, \partial_l\Theta^{jk}+\Theta^{kl}\, \partial_l\Theta^{ij}+\Theta^{jl}\, \partial_l\Theta^{ki}
\big)\neq0 \ .\label{Pi}
\eea
To construct the symplectic realization one can ``double'' the local space to $\real^{2d}$ with coordinates $\zeta^\mu=(x^i,{\tilde x}_i)$ for $\mu=1,\dots,2d$ and construct a Poisson bracket
\begin{equation}
\{\zeta^\mu,\zeta^\nu\}_p= \Omega^{\mu\nu}(\zeta)=\Omega_0^{\mu\nu}+\alpha\, \Omega_1^{\mu\nu}(\zeta)+{\cal O}(\alpha^2)
\end{equation}
as a formal power series in the parameter $\alpha$, where $\Omega_0^{\mu\nu}$ is the canonical symplectic matrix. The Poisson brackets of the original coordinate functions are then
\begin{equation}
\{x^i,x^j\}_p = \alpha\, \omega^{ij}(x,\tilde x) \qquad \mbox{with} \quad \omega^{ij}(x,0)=\Theta^{ij}(x) \ .
\end{equation}
An important requirement is that if $\Theta$ is a Poisson bi-vector, i.e., (\ref{Pi}) vanishes, then $ \omega^{ij}(x,\tilde x)=\Theta^{ij}(x)$. In particular, it implies that,
$
 \omega^{ij}(x,\tilde x)=\Theta^{ij}(x)-\alpha\, \Pi^{ijk}(x)\, \tilde x_k +{\cal O}(\alpha^2) \ .
$
The expansion may be explicitly constructed by introducing local Darboux coordinates $\eta^\mu=(y^i,\pi_i)$ and writing the generalised Bopp shift
$
x^i=y^i-\mbox{$\frac\alpha2$}\, \Theta^{ij}(y)\, \pi_j+{\cal O}(\alpha^2)$, and $\tilde x_i=\pi_i \ .$

The paper is organized as follows. In the Section 2 we construct a symplectic realization of a Poisson structure $\Theta(x)=\omega(x)$. We derive a recurrence relation for the generalized Bopp shift and the corresponding symplectic structure $\Omega(\zeta)$. Then, in the Section 3 we generalize the construction of the Section to the case of quasi-Poisson structure $\Theta$. It involves the non-trivial equations for the construction of the tensor $ \omega^{ij}(x,\tilde x)=\Theta^{ij}(x)+{\cal O}(\alpha) $. We prove the existence of the solution and give the recurrence relation for its construction. In the Section 5 we discuss the exemples of a symplectic realization of the quasi-Poisson structures mainly related to the non-geometric backgrounds in string and M-theory. For all exemples we derive the explicit form of the symplectic sructure in the ``double'' space and the generalized Bopp shift. In appendix we briefly review the algebra of octonions.

Through the text we will use different notations for the brackets: $\{\cdot,\cdot\}_p$ denotes the arbitrary Poisson (satisfying Jacobi identity) bracket, $\{\cdot,\cdot\}$ stands for the canonical Poisson bracket, and $\{\cdot,\cdot\}_Q$ indicates the quasi-Poisson bracket, when the Jacoby identity can be violated. 

\section{Poisson structure}

As a warm-up we start with more familiar case of symplectic realizations of Poisson
manifolds. Some blocks of the construction of this Section will be used for generic quasi-Poisson
structures. Note that for the case of the two-dimensional Poisson manifold this problem was solved first in \cite{Gomes:2009tk}.

Suppose that coordinates $x^i,$ $i=1,..,N$ satisfy the algebra of a given Poisson brackets
\begin{equation}\label{1}
   \{x^i, x^j\}_p=\,\alpha \,\omega^{ij}(x),
\end{equation}
where $\alpha$ is a deformation parameter and $\omega^{ij}(x)$ is a Poisson bi-vector. The Jacobi identity for the algebra (\ref{1}) reads:
\begin{equation}\label{JI}
\{x^i,\{x^j,x^k\}_p\}_p+\{x^i,\{x^j,x^k\}_p\}_p+\{x^i,\{x^j,x^k\}_p\}_p=0.
\end{equation}
In this section we will describe a recursive procedure of the construction of a $2N$-dimensional symplectic manifold with coordinates
 $\left( x^{i},\tilde x_{i}\right) ,$ satisfying the algebra
\begin{eqnarray}
\{x^i, x^j\}_p&=&\alpha\, \omega^{ij}(x),  \label{2} \\
\{x^i, \tilde x_j\}_p&=&\delta^i_j(x,\tilde x)=\delta^i_j+\alpha\,{\delta^{(1)}}^i_j(x,\tilde x)+{\cal O}(\alpha^2),\nonumber\\
\{\tilde x_i, \tilde x_j\}_p&=&\alpha\,\varpi_{ij}(x,\tilde x),\nonumber
\end{eqnarray}
here $\delta^{i}_j$ is a Kronecker delta and functions $\delta^i_j(x,\tilde x)$ and $\varpi_{ij}(x,\tilde x)$ should be fixed from the condition that the complete algebra of Poisson brackets satisfy the Jacobi identity.

To solve the above problem we will use the generalized Bopp shift \cite{KV,Kup14}, i.e., express the original coordinates $x^i$, and the double ones $\tilde x_i$, in terms of the Darboux coordinates $\left( y^{i},\pi _{i}\right) $, satisfying the canonical Poisson brackets,
\begin{equation}\label{3}
    \{y^i, \pi_j\}=\delta^i_j,\,\,\,\{y^i, y^j\}=\{\pi_i, \pi_j\}=0.
\end{equation}
The expression for the coordinates $x^{i}$ we are looking in a form%
\begin{equation}
x^{i}=y^{i}+\sum_{n=1}^{\infty}\Gamma^{i\left( n\right) }\left( y\right)
\left( \alpha\pi\right) ^{n},  \label{4}
\end{equation}
where $\Gamma^{i\left( n\right) }\left( y\right)
=\Gamma^{ij_{1}...j_{n}}\left( y\right)$. Note that by the construction the coefficient functions, $\Gamma^{ij_{1}...j_{n}}\left( y\right)$ should be symmetric in the last $n$ indices. 

It is better to keep the totally symmetric
part of $\Gamma^{ij_{1}...j_{n}}\left( y\right)$ vanishing, since it implies the stability of unity see for more details \cite{KV}. This implies that under the permutations
of indices the tensor $\Gamma^{ij_{1}...j_{n}}\left( y\right)$ transforms according to the Young
tableau:
\begin{equation}
\ytableausetup{mathmode}
\begin{ytableau}
j_1 & j_2 & \none[\dots] & j_n \\
i \label{YGamma}
\end{ytableau}
\end{equation}

Up to the second order one writes:%
\begin{equation}
x^{i}=y^{i}+\alpha\,\Gamma^{ij}\left( y\right) \pi_{j}+\alpha^{2}\,\Gamma
^{ijk}\left( y\right) \pi_{j}\pi_{k}+{\cal O}\left( \alpha^{3}\right) .  \label{5}
\end{equation}
To find the tensors $\Gamma^{ij_{1}...j_{n}}\left( y\right) $ we substitute the expression (\ref{4}) in (\ref{1}) and obtain the equation
\begin{align}
& \left\{ y^{i}+\sum_{n=1}^{\infty }\Gamma ^{i\left( n\right) }\left(
y\right) \left( \alpha \pi \right) ^{n},y^{j}+\sum_{n=1}^{\infty }\Gamma
^{j\left( n\right) }\left( y\right) \left( \alpha \pi \right) ^{n}\right\} =
\label{6} \\
&\alpha\, \omega ^{ij}\left( y^{i}+\sum_{n=1}^{\infty }\Gamma ^{i\left(
n\right) }\left( y\right) \left( \alpha \pi \right) ^{n}\right) .  \notag
\end{align}%
Comparing the coefficients in the left and in the right-hand sides of (\ref%
{6}) in each order in $\alpha ,$ one obtains algebraic equations on the
coefficients $\Gamma ^{i\left( n\right) }\left( y\right) $ in terms of $%
\omega ^{ij}$ and lower order coefficients $\Gamma ^{i\left( m\right)
}\left( y\right) ,\ m<n$. In what follows we will prove by the induction the existence of the solution of these equations in all orders and also will provide the explicit recursive formulae for this solution.

In the first order in $\alpha $ we have from (\ref{6}):%
\begin{equation*}
\Gamma ^{[ji]}:=\Gamma ^{ji}-\Gamma ^{ji}=\omega ^{ij},
\end{equation*}%
with a solution $\Gamma ^{ij}=-\omega ^{ij}/2+s^{ij}$, where $s^{ij}$ is an
arbitrary symmetric matrix. Choosing $s^{ij}=0$, which is compatible with quantization, since it can be always gauged away \cite{Kontsevich}, we end up with $\Gamma ^{ij}=-\omega ^{ij}$. 

The equation (\ref{6}) in
the second order is:%
\begin{eqnarray}
2\,\Gamma ^{[ji]k}\pi_k &=& G^{ijk}\pi_k\,, \label{7}\\ G^{ijk}&=&-\frac12\,\omega ^{lk}\partial _{l}\omega^{ij}+\frac14\,\omega^{il}\partial_l\omega^{jk}-\frac14\,\omega^{jl}\partial_l\omega^{ik}\,. \notag
\end{eqnarray}%
The symmetry of $\Gamma^{ijk}$ in the last two indices implies the identity: $\Gamma ^{[ji]k}+\Gamma ^{[kj]i}+\Gamma ^{[ik]j}\equiv0$, which in turn means the consistency condition for the solution of (\ref{7}),
\begin{equation}
G^{ijk}+G^{kij}+G^{jki}=0\,.\label{cycl}
\end{equation}
This condition we call cyclicity and it is satisfied due to the Jacobi identity (\ref{JI}).
A solution of the equation (\ref{7}) is given by:%
\begin{equation}
\Gamma ^{ijk}=-\frac{1}{6}\left(G^{ijk}+G^{ikj}\right)=\frac{1}{24}\,\omega ^{km}\partial _{m}\omega ^{ij}+\frac{1}{24}%
\,\omega ^{jm}\partial _{m}\omega ^{ik}.  \label{8}
\end{equation}%
It is symmetric in $j$ and $k$ by the construction and using (\ref{cycl}) one can make sure that it satisfies (\ref{7}).

Now let us discuss how to construct the solution in the higher orders. It is convenient to
represent the right hand side of (\ref{6}) as
\begin{eqnarray}
&& \omega^{ij}= \omega_n^{ij}+{\cal O}\left(\alpha^{n+1}\right)\,,  \label{9}\\
&&\omega_0^{ij}=\omega ^{ij}(y),\,\,\,\omega_1^{ij}=\omega ^{ij}+\alpha\,\partial _{l}\omega^{ij}\Gamma^{lk}\pi_k.\nonumber
\end{eqnarray}
We also introduce
corresponding notations for $ x^i$,
\begin{equation}
 x^i= x^i_n +{\cal O}\left(\alpha^{n+1}\right)\,,\qquad  x^i_{n+1}= x^i_n
+\alpha^{n+1}\Gamma^{ij_{1}...j_{n+1}}\pi_{j_{1}}...\pi_{j_{n+1}}\,.  \label{10}
\end{equation}
One may easily check that for any analytic function $f(x)$ one has:
\begin{equation}\label{45}
    f\left(x_n\right)=f_n+{\cal O}\left(\alpha^{n+1}\right).
\end{equation}

Suppose, that the expansion (\ref{4}) is known up to the $n$-th
order, i.e., the equation
\begin{equation}
\{ x_n^i, x_n^j\}=\alpha\, \omega_{n-1}^{ij} +{\cal O}\left(\alpha^{n+1}\right)\,,
\label{coman}
\end{equation}
holds true. 
In order to construct the $(n+1)$-th order in the decomposition we have
to solve the next order equation:
\begin{equation}
\{{x}_{n+1}^{i},{x}_{n+1}^{j}\} =\alpha\,{\omega}%
_{n}^{ij}+{\cal O}\left(\alpha^{n+2}\right) ~.  \label{11}
\end{equation}%
Using (\ref{10}) we represent it in the form
\begin{equation}
\alpha ^{n+1}(n+1)\Gamma^{[ji]j_{1}...j_{n}}\pi_{j_{1}}...\pi_{j_{n}} =G_{n+1}^{ij} +{\cal O}\left(\alpha^{n+2}\right)\, ,\label{12a}
\end{equation}
where
\begin{equation}
G_{n+1}^{ij}=\alpha\,{\omega}_{n}^{ij}-\{{x}_{n}^{i},{x}_{n}^{j}\} +{\cal O}\left(\alpha^{n+2}\right)\,.\label{12b}
\end{equation}
The above equation defines $G_{n+1}^{ij}$ up to the terms ${\cal O}\left(\alpha^{n+2}\right)$, and we do not include any higher-order terms in $G_{n+1}^{ij}$. Taking into account (\ref{coman}) one writes,
\begin{equation}
\alpha^{-(n+1)}\ G_{n+1}^{ij}=\frac{1}{n!}\,\left[ \frac{d^{n}}{d\alpha ^{n}}\omega ^{ij}\left(x_n\right) \right] _{\alpha =0}  -\sum_{m=1}^{n}\left\{ \Gamma ^{i\left( n+1-m\right) }\left( \pi
\right) ^{n+1-m},\Gamma ^{j\left( m\right) }\left( \pi \right) ^{m}\right\}\,.\label{13}
\end{equation}
One can also represent it in the form
\begin{equation}
G_{n+1}^{ij}=\alpha^{n+1}\ G^{ijj_{1}\dots j_{n}}(y)\ \pi_{j_{1}}...\pi_{j_{n}}\,\label{13a}
\end{equation}
where the coefficient functions $G^{ijj_{1}\dots j_{n}}$ are antisymmetric in first two indices and symmetric in last $n$ by the construction. 

Thus, (\ref{11}) implies the algebraic equation
\begin{equation}
(n+1)\Gamma^{[ji]j_{1}...j_{n}}=G^{ijj_{1}\dots j_{n}}\,.\label{GGam}
\end{equation}
Like in the case of the eq. (\ref{7}), the symmetry of the tensors $\Gamma^{ijj_{1}...j_{n}}$ in the last $n+1$ indices yields the consistency condition on the right hand side of (\ref{GGam}), namely the cyclicity relation:
\begin{equation}
G^{ijj_{1}\dots j_{n}}+G^{j_{1}ijj_{2}\dots j_{n}}+G^{jj_{1}ij_{2}\dots j_{n}}=0.  \label{cyclG}
\end{equation}
The condition (\ref{cycl}) holds true as a consequence of the Jacobi identity (\ref{JI}). The following Lemma shows that the same is valid for (\ref{cyclG}).

\begin{lemma}
\label{L1} The functions $G^{ijj_{1}\dots j_{n}}$ defined in eq. (\ref{13}-\ref{13a}) satisfy the cyclicity relation (\ref{cyclG}).
\end{lemma}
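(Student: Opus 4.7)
The plan is to extract the cyclicity from the Jacobi identity of the canonical Poisson bracket applied to $x_n^i,x_n^j,x_n^k$, using the Jacobi identity \eqref{JI} of $\omega$ to kill the leading $\omega$-contribution.

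\textbf{Step 1} (Canonical Jacobi at level $n$). I would start from the trivially satisfied identity
\[
\sum_{\mathrm{cyc}(ijk)}\bigl\{\{x_n^i,x_n^j\},x_n^k\bigr\}=0
\]
and substitute $\{x_n^i,x_n^j\}=\alpha\,\omega_n^{ij}-G_{n+1}^{ij}+{\cal O}(\alpha^{n+2})$, which follows from \eqref{12b}. This splits the identity into an $\omega$-part and a $G$-part.

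\textbf{Step 2} (The $\omega$-part vanishes by \eqref{JI}). Using $\omega_n^{ij}=\omega^{ij}(x_n)+{\cal O}(\alpha^{n+1})$, the chain rule for the canonical bracket, and the inductive hypothesis $\{x_n^l,x_n^k\}=\alpha\,\omega^{lk}(x_n)+{\cal O}(\alpha^{n+1})$, I would obtain
\[
\alpha\sum_{\mathrm{cyc}(ijk)}\{\omega_n^{ij},x_n^k\}=\alpha^2\sum_{\mathrm{cyc}(ijk)}\omega^{lk}(x_n)\,\partial_l\omega^{ij}(x_n)+{\cal O}(\alpha^{n+2}).
\]
The cyclic sum on the right is exactly \eqref{JI} evaluated at the point $x_n$, hence vanishes.

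\textbf{Step 3} (The $G$-part). Since $G_{n+1}^{ij}={\cal O}(\alpha^{n+1})$ and $x_n^k-y^k={\cal O}(\alpha)$, we may replace $x_n^k$ by $y^k$ at the cost of ${\cal O}(\alpha^{n+2})$ corrections, whence $\{G_{n+1}^{ij},x_n^k\}=-\partial G_{n+1}^{ij}/\partial\pi_k+{\cal O}(\alpha^{n+2})$. Using \eqref{13a} together with the symmetry of $G^{ijj_1\cdots j_n}$ in its last $n$ indices,
\[
\frac{\partial G_{n+1}^{ij}}{\partial\pi_k}=n\,\alpha^{n+1}\,G^{ijkj_2\cdots j_n}\,\pi_{j_2}\cdots\pi_{j_n}.
\]

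\textbf{Step 4} (Conclusion). Combining Steps 1--3 and reading off the coefficient of $\alpha^{n+1}$ yields
\[
\sum_{\mathrm{cyc}(ijk)}G^{ijkj_2\cdots j_n}\,\pi_{j_2}\cdots\pi_{j_n}=0
\]
as a polynomial identity in the indeterminates $\pi_a$. Since the coefficient tensor is symmetric in $j_2,\ldots,j_n$, this forces the coefficient to vanish, and relabelling $k\to j_1$ produces precisely \eqref{cyclG}.

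The main obstacle will be the careful bookkeeping of the ${\cal O}(\alpha^{n+2})$ remainders: one has to verify that replacing $\omega_n^{ij}$ by $\omega^{ij}(x_n)$, using $\{x_n^l,x_n^k\}$ only to leading order, and replacing $x_n^k$ by $y^k$ all produce corrections which are genuinely subleading and do not pollute the order-$(n+1)$ identity. The essential geometric input is the Jacobi identity for $\omega$, used in Step 2 to annihilate the $\omega$-part; everything else is formal manipulation of the canonical bracket.
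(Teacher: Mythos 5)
Your proof is correct and follows essentially the same route as the paper: both decompose $\{x_n^i,x_n^j\}$ via (\ref{12b}), combine the canonical Jacobi identity for the $x_n^i$ with the Jacobi identity of $\omega$ to arrive at $\{x_n^k,G_{n+1}^{ij}\}+\mbox{cycl.}(kij)={\cal O}(\alpha^{n+2})$, evaluate the bracket at leading order as a $\pi$-derivative, and extract (\ref{cyclG}) using the symmetry of $G^{ijj_1\dots j_n}$ in its last $n$ indices. The only minor difference is that you derive the intermediate relation (\ref{16}) explicitly from the chain rule, the inductive hypothesis (\ref{coman}) and the bivector form of (\ref{JI}) evaluated at $x_n$, whereas the paper obtains it by truncating the bracket identity (\ref{15}).
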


\begin{proof} 
The Jacobi identity (\ref{JI}) can be written as:
\begin{equation}\label{15}
  \{{x}^{k},{\omega}^{ij}(x)\}_p +\{{x}^{j},%
{\omega}^{ki}(x)\}_p +\{{x}^{i},{\omega}^{jk}(x)%
\}_p =0~,
\end{equation}
Since,  $ x^i= x^i_n +{\cal O}\left(\alpha^{n+1}\right)\,,$ and ${\omega}^{ij}(x_n)={\omega}_{n}^{ij}+{\cal O}\left(\alpha^{n+1}\right)$, the above equation implies
\begin{equation}
\{{x}_{n}^{k},{\omega}_{n}^{ij}\} +\{{x}_{n}^{j},%
{\omega}_{n}^{ki}\} +\{{x}_{n}^{i},{\omega}_{n}^{jk}%
\} ={\cal O}\left(\alpha^{n+1}\right)~,  \label{16}
\end{equation}
or equivalently,
\begin{equation}
\left\{{x}_{n}^{k},\alpha\,{\omega}_{n}^{ij}\right\} +\mbox{cycl.}%
(kij)={\cal O}\left(\alpha^{n+2}\right)~.
\end{equation}%
Now using (\ref{12b}), one gets,
\begin{equation}
\left\{{x}_{n}^{k},G_{n+1}^{ij}+\left\{ {x}_{n}^{i},{x}_{n}^{j}%
\right\} \right\} +\mbox{cycl.}(kij)={\cal O}\left(\alpha^{n+2}\right) ~.
\label{17}
\end{equation}
Since, the Jacobi identity,
\begin{equation}
\left\{{x}_{n}^{k},\left\{{x}_{n}^{i},{x}_{n}^{j}\right\} \right\}
+\mbox{cycl.}(kij)=0
\end{equation}
holds true at all orders of $\alpha $, including the order $\alpha ^{n+1}$,
from (\ref{17}) one obtains
\begin{equation}
\left\{ x_{n}^{k},G_{n+1}^{ij}\right\} +\mbox{cycl.}(kij)={\cal O}\left(\alpha^{n+2}\right)~,  \label{18}
\end{equation}
meaning that
\begin{equation}
\left\{ y^{k},G^{ijj_{1}\dots j_{n}}(y)\pi_{j_{1}}...\pi_{j_{n}}\right\} +\mbox{cycl.}(kij)=0~.
\end{equation}
Next one calculates
the Poisson bracket, and uses the symmetry of $G^{ijj_{1}\dots j_{n}}$ in the last $n$ indices
to prove the condition (\ref{cyclG}). Also the symmetry of $G^{ijj_{1}\dots j_{n}}$ in the last $n$
indices, implies the cyclic conditions holds for permuations of $(i,j,i_k)$ for any $%
k=1,\dots,n$.
\end{proof}

As long as the consistency condition (\ref{cyclG}) is satisfied, the solution of the equation (\ref{GGam}) is provided by the following 

\begin{lemma}
\label{L2} The tensors
\begin{equation}
\Gamma^{ij_1\dots j_{n+1}}=-\frac {1}{(n+1)(n+2)} \left(
G^{ij_1j_2\dots j_{n+1}}+ G^{ij_2j_1j_3\dots j_{n+1}}+\dots
+G^{ij_{n+1}j_1j_2\dots j_{n}} \right)  \label{GamG}
\end{equation}
are symmetric in the last $n+1$ indices and satisfy the equation (\ref{GGam}).
\end{lemma}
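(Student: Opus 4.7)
The plan is to verify Lemma~\ref{L2} by direct substitution of the given explicit formula, relying only on three ingredients: antisymmetry of $G^{ijj_1\dots j_n}$ in its first two slots, symmetry in its last $n$ slots (both built into the construction of $G$ in (\ref{13a})), and the cyclicity relation (\ref{cyclG}) established in Lemma~\ref{L1}.

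For the symmetry of $\Gamma^{ij_1\dots j_{n+1}}$ in its last $n+1$ indices, I would first rewrite the defining formula compactly as
\[\Gamma^{ij_1\dots j_{n+1}}=-\frac{1}{(n+1)(n+2)}\sum_{k=1}^{n+1} G^{\,i\,j_k\,j_1\dots \widehat{j_k}\dots j_{n+1}},\]
where $\widehat{j_k}$ denotes omission. Since each summand has last $n$ indices $(j_1,\dots,\widehat{j_k},\dots,j_{n+1})$ that $G$ treats symmetrically, an arbitrary permutation $\sigma$ of $(j_1,\dots,j_{n+1})$ merely reindexes the sum via $\ell=\sigma(k)$, leaving the total invariant.

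To verify (\ref{GGam}), I would compute $(n+1)\bigl(\Gamma^{jij_1\dots j_n}-\Gamma^{ijj_1\dots j_n}\bigr)$ by expanding both $\Gamma$'s with the compact formula above. Each expansion splits into one ``leading'' term whose second slot carries $i$ (resp.\ $j$) and $n$ ``mixed'' terms whose second slot carries some $j_k$. The two leading contributions combine, via $G^{jij_1\dots j_n}=-G^{ijj_1\dots j_n}$, into $-2G^{ijj_1\dots j_n}$. The crucial step is to handle each mixed pair $G^{jj_k i j_1\dots\widehat{j_k}\dots j_n}-G^{ij_k jj_1\dots\widehat{j_k}\dots j_n}$: rewriting the second term via antisymmetry as $+G^{j_k ij j_1\dots\widehat{j_k}\dots j_n}$ and applying (\ref{cyclG}) to the three indices $(i,j,j_k)$ collapses the pair to $-G^{ij j_k j_1\dots\widehat{j_k}\dots j_n}$, which by symmetry of $G$ in its last $n$ slots equals $-G^{ijj_1\dots j_n}$. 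Summing over $k=1,\dots,n$ thus contributes $-nG^{ijj_1\dots j_n}$. Adding the leading and mixed pieces gives $-(n+2)G^{ijj_1\dots j_n}$, and the overall prefactor $-\tfrac{1}{n+2}$ produces exactly $G^{ijj_1\dots j_n}$, as required.

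The one mild obstacle is notational bookkeeping—tracking which three indices enter each invocation of cyclicity and which ones ride passively in the symmetric spectator slots. No new identities beyond (\ref{cyclG}) and the declared symmetries of $G$ are needed, and the computation is short once the expansion is dissected into leading and mixed parts.
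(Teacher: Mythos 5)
Your proposal is correct and follows essentially the same route as the paper's own proof: expand both terms of $\Gamma^{[ji]j_1\dots j_n}$, use antisymmetry of $G$ in its first two slots for the leading pair (giving $-2G^{ijj_1\dots j_n}$), and collapse each of the $n$ mixed pairs to $-G^{ijj_1\dots j_n}$ via antisymmetry plus the cyclicity relation (\ref{cyclG}), so the prefactor $-\tfrac{1}{(n+1)(n+2)}$ yields (\ref{GGam}). Your explicit bookkeeping of the permutation reindexing for the symmetry claim and the uniform treatment of all $n$ mixed pairs merely spell out what the paper summarizes as ``by construction'' and ``treated similarly.''
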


\begin{proof}
The symmetry follows by the construction. Now let us consider the right hand side of (\ref{GamG}) and calculate
\begin{equation*}
  \Gamma^{[j_1i]j_{2}...j_{n+1}}=\Gamma^{j_1ij_{2}...j_{n+1}}-\Gamma^{ij_1j_{2}...j_{n+1}}.
\end{equation*}
 Since $G_{n+1}^{ij_{1}\dots j_{n+1}}$ is antisymmetric in the first two indices one has
\begin{equation*}
 G_{n+1}^{j_1ij_2\dots j_{n+1}}-G_{n+1}^{ij_1j_2\dots j_{n+1}}=- 2
G_{n+1}^{ij_1j_2\dots j_{n+1}}.
\end{equation*}
The cyclic condition (\ref{cyclG}) in $i,j_1,j_2$ implies:
\begin{equation*}
G_{n+1}^{j_1j_2i\dots j_{n+1}}-G_{n+1}^{ij_2j_1\dots j_{n+1}}= -G_{n+1}^{ij_1j_2\dots
j_{n+1}}.
\end{equation*}
Remaining $n-1$ combinations are
treated similarly using (\ref{cyclG}), and the assertion follows immediately.
\end{proof}

This Lemma implies that the tensors (\ref{GamG}) are indeed the coefficient
functions of the expansion (\ref{4}) and the functions $ x^i(y,\pi)$ satisfy the algebra of Poisson brackets (\ref{1}).

For the double coordinates $\tilde x_i$ we may write
\begin{equation}
\tilde x_i=\pi_{i}-\alpha\, j_{i}(y,\pi), \label{19}
\end{equation}
where $j_{i}(y,\pi,\alpha)$ is an arbitrary differentiable function. Calculating the Poisson brackets between coordinates (\ref{4}) and double coordinates (\ref{19}) one finds functions $\delta^i_j(y,\pi)$ and $\varpi_{ij}(y,\pi)$ which together with $\omega^{ij}(x)$ determine the symplectic structure (\ref{3}). Using the inverse expressions $y^i=y^i(x,\tilde x)$ and $\pi_i=\pi_i(x,\tilde x)$ we end up with
\begin{align}
& \delta^{ij}\left( x,\tilde x\right) =\delta^{ij}+\alpha\left(\tilde\partial_{i}\,j_{j}-%
\frac12\,\partial_{j}\omega^{li}\tilde x_{l}\right) +{\cal O}\left(\alpha^{2}\right)\,, \label{20} \\
& \varpi^{ij}\left( x,\tilde x\right) =\alpha\left(
\partial_{j}j_{i}-\partial_{i}j_{j}\right) +{\cal O}\left(\alpha^{2}\right)\,, \notag
\end{align}
where $\tilde\partial_{i}=\partial/\partial \tilde x_i$. For simplicity in what follows we just set $\tilde x_i=\pi_{i}$.

\section{Quasi-Poisson structure and symplectic embeddings}

In this section we discuss the problem of a symplectic realization of a quasi-Poisson manifold. Consider the bracket
\begin{equation}\label{21}
   \{f, g\}_Q=\alpha\, \Theta^{ij}(x)\ \partial_if\partial_jg\,,
\end{equation}
defined by a given bi-vector $\Theta=\frac{1}{2}\Theta^{ij}(x)\partial_i\wedge\partial_j$. The bracket (\ref{21}) is bilinear and antisymmetric, but in general the Jacobi identity can be violated, meaning that the three-bracket,
\begin{equation}\label{3bracket}
  \left\{f,g,h\right\}_Q:= \mbox{$\frac{1}3$} \, \big(\{f,\{g,h\}_Q\}_Q+\{h,\{f,g\}_Q\}_Q+\{g,\{h,f\}_Q\}_Q\big)\,,
\end{equation}
can be different from zero for three arbitrary functions $f$, $g$ and $h$. By the definition the above bracket is also antisymmetric and tri-linear. One may also write,
\begin{eqnarray}
 \left\{f,g,h\right\}_Q=\alpha^2\, \Pi^{ijk}\ \partial_if \partial_jg \partial_kh.
\end{eqnarray}
Using the definition and the properties of the brackets (\ref{21}) and (\ref{3bracket}) one may check that the following combination of the two and three brackets is identically zero\footnote{In fact, this expression relates two different ways of rebracketing the expression $\{f,\{g,\{h,k\}\}\}$, see \cite{MSS1} for details.}:
\begin{eqnarray}\label{HJI}
&& - \{f,g, \{h,k\}_Q\}_Q+ \{g,h ,\{k,f\}_Q\}_Q- \{h,k, \{f,g\}_Q\}_Q \\
&& +  \{k,f ,\{g,h\}_Q\}_Q- \{h,f, \{g,k\}_Q\}_Q+ \{g,k, \{f,h\}_Q\}_Q\notag \\
&& + \{f, \{g,h,k\}_Q\}_Q-\{k, \{f,g,h\}_Q\}_Q+\{h, \{k,f,g\}_Q\}_Q-\{g, \{h,k,f\}_Q\}_Q\equiv0.\notag
\end{eqnarray}
Since (\ref{HJI}) should be valid for any functions $f$, $g$, $h$ and $k$ one obtains the following identity involving $\Theta$ and $\Pi$:
\begin{eqnarray}\label{HJI1}
&& \Pi^{ijm}\partial_m\Theta^{kl}-\Pi^{jkm}\partial_m\Theta^{li}+\Pi^{klm}\partial_m\Theta^{ij} \\ &&-\Pi^{lim}\partial_m\Theta^{jk}-\Pi^{ikm}\partial_m\Theta^{jl}+\Pi^{jlm}\partial_m\Theta^{ki}\notag \\
&&  +\Theta^{lm}\partial_m\Pi^{ijk}-\Theta^{im}\partial_m\Pi^{jkl}+\Theta^{jm}\partial_m\Pi^{kli}-\Theta^{km}\partial_m\Pi^{lij}\equiv0.\notag
\end{eqnarray}

In this section we will consider the following problem, how to construct a $2N$-dimensional symplectic manifold $\mathcal{S}$ with coordinates
 $\left( x^{i},\tilde x_{i}\right) ,$ and Poisson brackets
\begin{eqnarray}
\{x^i, x^j\}_p&=&\alpha\, \omega^{ij}(x,\tilde x),  \label{23} \\
\{x^i, \tilde x_j\}_p&=&\delta^i_j(x,\tilde x)=\delta^i_j+\alpha\,{\delta^{(1)}}^i_j(x,\tilde x)+{\cal O}(\alpha^2),\nonumber\\
\{\tilde x_i, \tilde x_j\}_p&=&\alpha\,\varpi_{ij}(x,\tilde x),\nonumber
\end{eqnarray}
such that
\begin{itemize}
\item a restriction of (\ref{23}) on the subspace generated by $x^i$ would reproduce the original quasi-Poisson structure (\ref{21}), i.e., 
$\{f(x), g(x)\}_p|_{\tilde x=0}=\{f, g\}_Q,$ meaning that
\begin{equation}\label{24}
  \omega^{ij}(x,\tilde x)=\sum_{n=0}^{\infty}\Theta^{ij(n)}(x)(\alpha \tilde x)^{n}=\Theta^{ij}(x)+\alpha\,\Theta^{ijk}(x)\,\tilde x_k +\dots\,.
\end{equation}
\item For Poisson bi-vector $\Theta$ the algebra (\ref{23}) should restore the symplectic realization of a Poisson structure previously defined in (\ref{2}). That is, the tensors $\Theta^{ij(n)}(x),$ $n\geq1$, should be proportional to $\Pi^{ijk}$ and its derivatives.
\end{itemize}
By the definition $\Theta^{ij(n)}(x)$ are antisymmetric in first two indices and symmetric in last $n$ indices. There is no apriori symmetry beween the first pair and the rest of the indices, but we impose
the requirement that $\Theta^{ijk_1k_2\dots k_n}$ transforms under the permutations according
to the following Young tableau.
\begin{equation}
\ytableausetup{mathmode}
\begin{ytableau}
k_1 & k_2 & \none[\dots] & k_n \\
i \\
j
\end{ytableau} \label{YOmega}
\end{equation}

As in the previous section, we are looking for the perturbative expansion of the coordinates $x^i$ and momenta $p_i$ in terms of the Darboux variables $\left( y^{i},\pi _{i}\right) $, satisfying the canonical PB (\ref{3}).
The generalized Bopp shift is given by (\ref{4}), and $\tilde x_i=\pi_i$. It is convenient to introduce here the following notations
\begin{eqnarray}\label{39}
\omega^{ij}(x,\tilde x) & = & \tilde\omega^{ij}_n(x,\tilde x)+{\cal O}\left(\alpha^{n+1}\right)\,,\\ \tilde\omega^{ij}_{n+1}(x,\tilde x)&=&\tilde\omega^{ij}_n(x,\tilde x)+\Theta^{ij(n+1)}(x)\,( \alpha\tilde x)^{n+1}.\nonumber
\end{eqnarray}
The expression for $x^i_n$ as a truncation of a generalized Bopp shift is defined by (\ref{10}). Here we introduce also
 \begin{equation}\label{44}
\omega^{ij}\left(y^{i}+\sum_{n=1}^{\infty}\Gamma^{i\left( n\right) }\left( y\right)
\left( \alpha\pi\right) ^{n},\pi\right)=    \omega^{ij}_n+{\cal O}\left(\alpha^{n+1}\right)\,.
\end{equation}
The difference between $\tilde\omega^{ij}_n$ and $\omega^{ij}_n$ is that the first one is a truncation of the series (\ref{24}) written in terms of the original phase space coordinates $x$ and $\tilde x$, while the second is a truncation of the corresponding series expressed in Darboux coordinates $y$ and $\pi$.
We stress also that the structure of the function $\omega^{ij}_n$ in (\ref{44}) is different from the corresponding expression (\ref{9}) in the Poisson case, because of the presence of the terms $\Theta^{ij(n)}(x)(\alpha \tilde x)^{n}$ in the expression for $\omega^{ij}(x,\tilde x)$. In particular,
\begin{equation*}
  \omega_0^{ij}=\Theta ^{ij}(y),\,\,\,\omega_1^{ij}=\Theta ^{ij}+\alpha\left(\partial _{l}\Theta^{ij}\Gamma^{lk}\pi_k+\Theta^{ijk}\pi_k\right)\, ,
\end{equation*}
etc. Taking into account (\ref{45}) and (\ref{39}) one may see that
\begin{eqnarray}
  && \omega^{ij}_n=\tilde\omega^{ij}_n\left(x_n,\pi\right)+{\cal O}\left(\alpha^{n+1}\right)= \label{46}\\
  &&\Theta^{ij}(x_n)+\alpha\Theta^{ijk}(x_{n-1})\pi_k +\dots+\Theta^{ij(n-1)}(x_1)\,(\alpha\pi)^{n-1}+\Theta^{ij(n)}(y)\,(\alpha\pi)^n+{\cal O}\left(\alpha^{n+1}\right)\,.\notag
\end{eqnarray}
Also it is useful to write
\begin{equation}\label{50}
    \omega^{ij}_n=\tilde\omega^{ij}_{n-1}\left(x_n,\pi\right)+\Theta^{ij(n)}(y)\,(\alpha \pi)^{n}+{\cal O}\left(\alpha^{n+1}\right)\,.
\end{equation}

To define the coefficient functions  $\Gamma^{i\left( n\right) }$ and $\Theta^{ij(n)}$ of the generalized Bopp shift (\ref{4}) and the series (\ref{24}) correspondingly, one should solve the equation
\begin{align}
& \left\{ y^{i}+\sum_{n=1}^{\infty }\Gamma ^{i\left( n\right) }\left(
y\right) \left( \alpha \pi \right) ^{n},y^{j}+\sum_{n=1}^{\infty }\Gamma
^{j\left( n\right) }\left( y\right) \left( \alpha \pi \right) ^{n}\right\} =
\label{26} \\
& \alpha \sum_{n=0}^{\infty}\Theta^{ij(n)}\left( y^{i}+\sum_{m=1}^{\infty }\Gamma ^{i\left(
m\right) }\left( y\right) \left( \alpha \pi \right) ^{m}\right)(\alpha\pi)^{n}.  \notag
\end{align}%
Like in the previous section in the first order in $\alpha$ one obtains:%
\begin{equation*}
\Gamma ^{[ji]}=\Theta ^{ij},
\end{equation*}%
with a solution $\Gamma ^{ij}=-\Theta ^{ij}/2$. The second order in $\alpha$ gives:%
\begin{equation}
2\Gamma ^{[ji]k} =-\frac{1}{2}\Theta ^{lk}\partial _{l}\Theta^{ij}+\frac{1}{4}\Theta^{il}\partial_l\Theta^{jk}-\frac{1}{4}\Theta^{jl}\partial_l\Theta^{ik}+\Theta^{ijk}.  \label{27}
\end{equation}%
The consistency condition for the equation (\ref{27}), implies the relation
\begin{equation}\label{28}
  3\, \Pi^{ijk}+\Theta^{ijk}+\Theta^{kij}+\Theta^{jki}=0\,,
\end{equation}
which in turn can be interpreted as an equation on the coefficient function $\Theta^{ijk}$. A solution is:
\begin{equation}\label{29}
  \Theta^{ijk}=-\Pi^{ijk}\,.
\end{equation}
Since, $\Theta^{ijk}$ is antisymmetric in all indices it will not contribute to a solution of the equation (\ref{27}) such that:%
\begin{equation}
\Gamma ^{ijk}=\frac{1}{24}\Theta ^{km}\partial _{m}\Theta ^{ij}+\frac{1}{24}%
\Theta ^{jm}\partial _{m}\Theta ^{ik}\,.  \label{30}
\end{equation}%
The crucial difference of the current situation with respect to the Poisson case is that the consistency condition (\ref{28}) of the algebraic equation (\ref{27}), is not satisfied automatically, but instead provides the equation for the definition of the corrections $\Theta^{ij(n)}(x)(\alpha \tilde x)^{n}$, $n\geq1$, to the given bi-vector $\Theta^{ij}(x)$ in (\ref{24}), which are needed for the Jacobi identity for the algebra (\ref{23}) to hold.

Following the logic of the previous section suppose we know the solution of the equation (\ref{26}) up to the $n$-th order in $\alpha$. It means that the expressions for $x_n^i$ and $\omega^{ij}_{n-1}$ are known such that the equation 
\begin{equation}\label{48}
    \{x_n^i, x_n^j\}=\alpha\, \omega^{ij}_{n-1}+{\cal O}\left(\alpha^{n+1}\right)\,,
\end{equation}
holds true. Taking into account (\ref{46}) the consistency condition for the above equation can be written as
\begin{equation}\label{47}
   \{x_{n-1}^{k},{\tilde\omega}^{ij}_{n-1}(x_{n-1},\pi)\} +\{x_{n-1}^{j},%
{\tilde\omega}^{ki}_{n-1}(x_{n-1},\pi)\} +\{x_{n-1}^{i},{\tilde\omega}^{jk}_{n-1}(x_{n-1},\pi)%
\} ={\cal O}\left(\alpha^{n}\right)~.
\end{equation}
To obtain the next order contributions in the expansions  (\ref{4}) and (\ref{24}) one needs to solve the equation
\begin{equation}\label{49}
    \{x_{n+1}^i, x_{n+1}^j\}=\alpha\, \omega^{ij}_{n}+{\cal O}\left(\alpha^{n+2}\right)\,.
\end{equation}
Like in the Poisson case discussed in the previous section, the above equation is equivalent to the equations (\ref{12a}-\ref{12b}), with the difference that now $\omega^{ij}_{n}$ contains the corrections $\Theta^{ij(n)}\,(\alpha \tilde x)^{n}$. In particular
\begin{equation}\label{50}
    \omega^{ij}_n=\tilde\omega^{ij}_{n-1}\left(x_n,\pi\right)+\Theta^{ij(n)}(y)\,(\alpha \pi)^{n}+{\cal O}\left(\alpha^{n+2}\right)\,,
\end{equation}
where the function $\Theta^{ij(n)}(y)$  is yet unknown and should be found solving the consistency condition  for the equation (\ref{49}):
\begin{equation}\label{51}
   \{{x}_n^{k},{\omega}^{ij}_{n}\} +\{{x}_n^{j},%
{\omega}^{ki}_{n}\} +\{{x}_n^{i},{\omega}^{jk}_{n}%
\} ={\cal O}\left(\alpha^{n+1}\right)~.
\end{equation}
Taking into account (\ref{50}) the above equation can be written as
\begin{eqnarray}
  && \{{x}_n^{k},\tilde\omega^{ij}_{n-1}\left(x_n,\pi\right)\} +\{{x}_n^{j},%
\tilde\omega^{ki}_{n-1}\left(x_n,\pi\right)\} +\{{x}_n^{i},\tilde\omega^{jk}_{n-1}\left(x_n,\pi\right)%
\}\label{52}\\&&+n\,\alpha^n\,\left(\Theta^{ijk(n-1)}+\Theta^{kij(n-1)}+\Theta^{jki(n-1)}\right)( \pi)^{n-1} ={\cal O}\left(\alpha^{n+1}\right)~.\nonumber
\end{eqnarray}
That is, the coefficient function $\Theta^{ij(n)}$ should satisfy the equation
\begin{equation}\label{53}
    n\,\alpha^n\,\left(\Theta^{ijk(n-1)}+\Theta^{kij(n-1)}+\Theta^{jki(n-1)}\right)( \pi)^{n-1}+F^{ijk}_{n}=0\,,
\end{equation}
where
\begin{equation}\label{54}
    F^{ijk}_{n}=\{{x}_n^{k},\tilde\omega^{ij}_{n-1}\left(x_n,\pi\right)\} +\{{x}_n^{j},%
\tilde\omega^{ki}_{n-1}\left(x_n,\pi\right)\} +\{{x}_n^{i},\tilde\omega^{jk}_{n-1}\left(x_n,\pi\right)%
\}+{\cal O}\left(\alpha^{n+1}\right)\,.
\end{equation}
In particular, $F^{ijk}_{1}=3\,\Pi^{ijk}$, etc. The above definition and the relation (\ref{47}) imply that
\begin{equation}\label{55}
    F^{ijk}_{n}=\alpha^n\,F^{ijk(n-1)}(y)\,( \pi)^{n-1}\,.
\end{equation}
The coefficient functions $F^{ijk(n-1)}$ by the definition (\ref{54}) are antisymmetric in the first three indices $ijk$ and symmetric in last $n-1$ indices.

\subsection{Next to the leading order}

The solution of the equation (\ref{53}) in the first order in $\alpha$ is given by (\ref{29}). However, just like it happened with the equation (\ref{GGam}), its non-trivial nature and the corresponding consistency condition can be seen in the next to the leading order. That is why in this subsection we discuss in details the equation (\ref{54}) in the second order in $\alpha$, where it is equivalent to the  algebraic equation on the coefficient function $\Theta^{ijkl}$:
\begin{equation}\label{32}
 2( \Theta^{ijkl}+\Theta^{kijl}+\Theta^{jkil})-F^{ijkl}=0\,,
\end{equation}
with
\begin{eqnarray}
\label{33}
  &&-F^{ijkl}=\Theta^{kml}\partial_m\Theta^{ij}+ \Theta^{jml}\partial_m\Theta^{ki}+\Theta^{iml}\partial_m\Theta^{jk}\\
  &&\Theta^{km}\partial_m\Theta^{ijl}+\Theta^{jm}\partial_m\Theta^{kil}+\Theta^{im}\partial_m\Theta^{jkl}\notag \\
  &&\frac{1}{2}\Theta^{ijm}\partial_m\Theta^{kl}+\frac{1}{2}\Theta^{kim}\partial_m\Theta^{jl}+\frac{1}{2}\Theta^{jkm}\partial_m\Theta^{il}\,.\notag
\end{eqnarray}
Since $\Theta^{ijkl}$ should be antisymmetric in first two indices and symmetric in last two indices one obtains the consistency condition for the solution of the algebraic equation (\ref{32}):
\begin{equation}\label{34}
  F^{ijkl}-F^{lijk}+F^{klij}-F^{jkli}=0.
\end{equation}
Taking into account (\ref{29}) one finds that this equation is exactly the relation (\ref{HJI1}).

The following combination,
\begin{equation}\label{35}
 \Theta^{ijkl}= \frac{1}{8}\left(F^{ijkl}+F^{ijlk}\right)\,,
\end{equation}
is symmetric in $kl$ and antisymmetric in $ij$ by the construction, as well as satisfies the equation (\ref{32}) due to (\ref{34}). One writes:
\begin{eqnarray}
  &&\Theta^{ijkl}=\frac{3}{16}\Pi^{jlm}\partial_m\Theta^{ki}+\frac{3}{16}\Pi^{jkm}\partial_m\Theta^{li}-
  \frac{3}{16}\Pi^{ilm}\partial_m\Theta^{kj}-\frac{3}{16}\Pi^{ikm}\partial_m\Theta^{lj} \label{35a}\\
  &&-\frac{1}{8}\Theta^{km}\partial_m\Pi^{ijl}-\frac{1}{8}\Theta^{lm}\partial_m\Pi^{ijk}\,.\notag
\end{eqnarray}

Now one may use the Lemma \ref{L2} to construct the third order expression for $x^i$. From the equation
\begin{equation}\label{35b}
    \{x_3^i, x_3^j\}=\alpha \omega^{ij}_2+{\cal O}\left(\alpha^{4}\right)\,,
\end{equation}
where
\begin{equation*}
    \omega_2^{ij}=\Theta^{ij}(x_2)+\alpha\Theta^{ijk}(x_1)\pi_k +\alpha^2\Theta^{ijkl}(y)\pi_k \pi_l+{\cal O}\left(\alpha^{3}\right)\,,
\end{equation*}
we find that
\begin{eqnarray}
  &&G^{ijmn}=\partial_l\Theta^{ij}\Gamma^{lmn}+\frac{1}{8}\partial_l\partial_{k}\Theta^{ij}\Theta^{lm}\Theta^{kn}\label{G3}\\
  &&+\frac{1}{4}\partial_l\Pi^{ijm}\Theta^{ln}+\frac{1}{4}\partial_l\Pi^{ijn}\Theta^{lm}+\Theta^{ijmn} \notag\\
  &&-\frac{1}{2}\Theta^{jl}\partial_l\Gamma^{imn}+\frac{1}{2}\Theta^{il}\partial_l\Gamma^{jmn}-\frac{1}{2}\Gamma^{jml}\partial_l\Theta^{in}-\frac{1}{2}\Gamma^{jnl}\partial_l\Theta^{im} \notag\\
  &&+\frac{1}{2}\Gamma^{iml}\partial_l\Theta^{jn}+\frac{1}{2}\Gamma^{inl}\partial_l\Theta^{jm}\,.\notag
\end{eqnarray}
After simplifications one obtains the expression
\begin{eqnarray}
  &&\Gamma^{ijmn}=-\frac{1}{12}\left(G^{ijmn}+G^{imjn}+G^{injm}\right)\label{Gamma3}\\
  &&=-\frac{1}{12}\left(2\Gamma^{lmn}\partial_l\Theta^{ij}+2\Gamma^{ljn}\partial_l\Theta^{im}+2\Gamma^{ljm}\partial_l\Theta^{in}\right.\notag\\
 &&\left. +\frac{1}{12}\Theta^{lm}\Theta^{kn}\partial_l\partial_k\Theta^{ij}+\frac{1}{12}\Theta^{lj}\Theta^{kn}\partial_l\partial_k\Theta^{im}+\frac{1}{12}\Theta^{lj}\Theta^{km}\partial_l\partial_k\Theta^{in}\right)\,.\notag
\end{eqnarray}
That is,
\begin{equation}\label{x3}
    x^i_3=y^i-\frac{\alpha}{2}\Theta^{ij}\pi_j+\frac{\alpha^2}{12}\Theta^{km}\partial_m\Theta^{ij}\pi_j\pi_k-\frac{\alpha^3}{48}\Theta^{km}\Theta^{ln}\partial_m\partial_n\Theta^{ij}\pi_j\pi_k\pi_l\,.
\end{equation}

Note that the expression for the generalized Bopp shift is absolutely identical to the Poisson case. The reason for that is the fact that although the tensors $\Theta^{ij(n)}$, for $n\geq1$, enter the right hand side of the equation (\ref{49}), i.e., in the definition of the tensors $G^{ij(n)}$, they are constructed as a linear combination of the tensors $F^{ijk(n-1)}$, which are antisymmetric in the first three indices. According to the Lamma \ref{L2}, the solution of the equation (\ref{49}), i.e., the expression for $x^i_{n+1}$ is constructed by the symmetrization of the last $(n+1)$ indices of the tensors $G^{ij(n)}$, which annihilates any contribution from the tensors $F^{ijk(n-1)}$, and consequently from $\Theta^{ij(n)}$, for $n\geq1$.

\subsection{Higher order terms $\Theta^{ij(n)}$, $n\geq2$.}

Once we know the structure of the consistency condition for the equation (\ref{53}), we may proceed to the higher orders and prove the following

\begin{lemma}
\label{L3} The tensors $F^{ijkl(n-2)}$, defined in (\ref{55}) satisfy the relation:
\begin{equation}\label{56}
    F^{ijkl(n-2)}-F^{l ijk(n-2)}+F^{kl ij(n-2)}-F^{jkli(n-2)}=0\,.
\end{equation}
\end{lemma}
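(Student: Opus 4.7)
The plan is to adapt the approach used in the base case $n=2$, where the consistency condition (\ref{34}) was shown to coincide with the rebracketing-derived identity (\ref{HJI1}). The crucial input is that the extended bracket $\{\cdot,\cdot\}_p$ on the symplectic space is a genuine Poisson bracket, so the Jacobi identity for it holds exactly; truncating $x\mapsto x_n$ then produces controlled identities at each order in $\alpha$, and Lemma~\ref{L3} is the order-$\alpha^{n+1}$ consistency one level beyond (\ref{51}).

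Concretely, I would proceed as follows. First, I would start from the truncated three-index consistency (\ref{51}), apply an additional Poisson bracket $\{x_n^l,\cdot\}$, and use the exact Jacobi identity of the canonical bracket to reorganize the resulting nested two-brackets so that the outer bracket always acts on a coordinate function. Second, I would take the alternating cyclic sum in the four indices $(i,j,k,l)$ with the sign pattern $+,-,+,-$ that appears in (\ref{56}), and expand each $\omega_n^{\cdot\cdot}$ using (\ref{50}) together with the definition (\ref{54}) of $F^{ijk}_n$. Third, I would isolate the coefficient of $\pi_{j_1}\cdots\pi_{j_{n-2}}$, obtaining an algebraic identity relating the $F^{ijkl(n-2)}$ tensors to the previously constructed corrections $\Theta^{ij(m)}$ with $m<n$. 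Finally, I would invoke the Young-tableau symmetry (\ref{YOmega}) of these lower-order $\Theta^{ij(m)}$: combined with the alternating cyclic sign pattern, those symmetries force every term involving $\Theta^{ij(m)}$ with $m<n$ to cancel pairwise, leaving exactly the four-term combination of $F^{ijkl(n-2)}$ stated in (\ref{56}).

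The base case $n=2$ has already been verified in the analysis leading from (\ref{32}) to (\ref{35a}), where the identity reduces directly to the master relation (\ref{HJI1}) satisfied by any quasi-Poisson structure; the induction then packages this with the consistency of the symplectic realization through order $n$, i.e.\ the hypothesis (\ref{48}). The main obstacle I expect is the bookkeeping in the second and fourth steps: the four-term alternating sum mixes a large number of nested Poisson brackets and derivatives of the bi-vector, and verifying that all lower-order $\Theta^{ij(m)}$ contributions cancel requires careful use of the Young-tableau symmetries together with Lemmas \ref{L1} and \ref{L2} applied to the previously constructed tensors $\Gamma^{i(n)}$ and $\Theta^{ij(m)}$.
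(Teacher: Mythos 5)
Your opening move is the same as the paper's: recasting (\ref{56}) as the statement that the alternating combination $\{x_n^l,F^{ijk}_n\}-\{x_n^k,F^{lij}_n\}+\{x_n^j,F^{kli}_n\}-\{x_n^i,F^{jkl}_n\}$ is ${\cal O}(\alpha^{n+1})$, expanding it through the definition (\ref{54}) into nested canonical brackets, and invoking the exact Jacobi identity of the canonical bracket. The gap is in how you close the argument. The paper never extracts the coefficient of $\pi_{j_1}\cdots\pi_{j_{n-2}}$ and never uses the Young-tableau symmetry (\ref{YOmega}) of the lower-order $\Theta^{ij(m)}$, nor Lemmas~\ref{L1} and \ref{L2}; none of these is the mechanism that kills the residual terms, and it is far from clear that they could be, since the lower-order data enter only through $\omega^{ab}_{n-1}$ in complicated contractions with the $\Gamma$'s and their derivatives. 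What actually makes the twelve nested brackets collapse is: (i) the canonical Jacobi identity pairs them into six terms of the form $\{\{x_n^a,x_n^b\},\tilde\omega^{cd}_{n-1}(x_n,\pi)\}$; (ii) the induction hypothesis (\ref{48}) replaces each inner bracket by $\alpha\,\omega^{ab}_{n-1}$ up to ${\cal O}(\alpha^{n+1})$; (iii) the relation (\ref{46}), $\alpha\,\tilde\omega^{cd}_{n-1}(x_n,\pi)=\alpha\,\omega^{cd}_{n-1}+{\cal O}(\alpha^{n+1})$, identifies both slots; and (iv) the antisymmetry of the Poisson bracket then cancels the six terms in pairs, because the alternating sign pattern makes every unordered pair of index pairs appear twice with its arguments swapped. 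Your proposal mentions (\ref{48}) only as background and omits (iii) and (iv) entirely, replacing them by an unsubstantiated claim that the Young symmetries force pairwise cancellation of all $\Theta^{ij(m)}$, $m<n$, contributions; as stated, that step would not go through.

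A second, related caution: your framing sentence that ``the extended bracket $\{\cdot,\cdot\}_p$ is a genuine Poisson bracket, so its Jacobi identity holds exactly'' is circular at this stage of the construction. The extended bracket is only known to close the Jacobi identity through order $\alpha^{n}$ (that is the content of (\ref{48}) and (\ref{47})); the order-$\alpha^{n+1}$ identity is precisely what solving (\ref{53}) is meant to secure, and Lemma~\ref{L3} is needed to show that (\ref{53}) is solvable. The only exact identity you may use is the Jacobi identity of the canonical bracket on the Darboux coordinates, together with the order-$n$ induction hypothesis — which is exactly how the paper's proof is organized.
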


\begin{proof}
To prove the above statement first we observe that the relation (\ref{56}) is equivalent to the condition
\begin{equation}\label{57}
   \{x^l_n, F^{ijk}_{n}\}- \{x^k_n, F^{lij}_{n}\}+ \{x^j_n, F^{kli}_{n}\}- \{x^i_n, F^{jkl}_{n}\}={\cal O}\left(\alpha^{n+1}\right)\,.
\end{equation}
Using the definition (\ref{54}) we may write the left-hand side of this equation as:
\begin{align}
&\{x^l_n, \{{x}_n^{i},\tilde\omega^{jk}_{n-1}\left(x_n,\pi\right)%
\}\}+\{x^l_n, \{{x}_n^{k},\tilde\omega^{ij}_{n-1}\left(x_n,\pi\right)\}\} +\{x^l_n,\{{x}_n^{j},%
\tilde\omega^{ki}_{n-1}\left(x_n,\pi\right)\}\} \label{58} & \\
- &\{x^k_n,\{{x}_n^{l}, \tilde\omega^{ij}_{n-1}\left(x_n,\pi\right)\}\} -\{x^k_n, \{{x}_n^{j},%
\tilde\omega^{li}_{n-1}\left(x_n,\pi\right)\}\} -\{x^k_n, \{{x}_n^{i},\tilde\omega^{jl}_{n-1}\left(x_n,\pi\right)%
\}\} \notag&  \\
+ &\{x^j_n,\{{x}_n^{k},\tilde\omega^{li}_{n-1}\left(x_n,\pi\right)\}\} +\{x^j_n,\{{x}_n^{i},%
\tilde\omega^{kl}_{n-1}\left(x_n,\pi\right)\}\} +\{x^j_n, \{{x}_n^{l},\tilde\omega^{ik}_{n-1}\left(x_n,\pi\right)%
\}\} \notag & \\
-& \{x^i_n,\{{x}_n^{j}, \tilde\omega^{kl}_{n-1}\left(x_n,\pi\right)\}\} -\{x^i_n, \{{x}_n^{l},%
\tilde\omega^{jk}_{n-1}\left(x_n,\pi\right)\}\}-\{x^i_n, \{{x}_n^{k},\tilde\omega^{lj}_{n-1}\left(x_n,\pi\right)%
\}\}+{\cal O}\left(\alpha^{n+1}\right)\,.&\notag
\end{align}
The Jacoby identity for the functions $x^l_n$, $x^k_n$ and $\tilde\omega^{ij}_{n-1}\left(x_n,\pi\right)$ reads
\begin{equation*}\label{59}
    \{x^l_n, \{{x}_n^{k},\tilde\omega^{ij}_{n-1}\left(x_n,\pi\right)\}\}+\{\tilde\omega^{ij}_{n-1}\left(x_n,\pi\right),\{x^l_n, {x}_n^{k},\}\}+\{x^k_n, \{\tilde\omega^{ij}_{n-1}\left(x_n,\pi\right),{x}_n^{l}\}\}\equiv0\,,
\end{equation*}
and holds in all orders in $\alpha$.
Since by (\ref{48}) the functions $x^l_n$ satisfy the equation,  $\{x_n^l, x_n^k\}=\alpha\, \omega^{lk}_{n-1}+{\cal O}\left(\alpha^{n+1}\right)$, from the above identity we conclude that
\begin{equation*}\label{60}
    \{x^l_n, \{{x}_n^{k},\tilde\omega^{ij}_{n-1}\left(x_n,\pi\right)\}\}-\{x^k_n, \{{x}_n^{l},\tilde\omega^{ij}_{n-1}\left(x_n,\pi\right)\}\}=\{\alpha\omega^{lk}_{n-1},\tilde\omega^{ij}_{n-1}\left(x_n,\pi\right)\}+{\cal O}\left(\alpha^{n+1}\right)\,.
\end{equation*}
The latter means that the expression (\ref{58}) can be rewritten as
\begin{align}
&\{\alpha\omega^{li}_{n-1},\tilde\omega^{jk}_{n-1}\left(x_n,\pi\right)\}-\{\alpha\omega^{kj}_{n-1},\tilde\omega^{li}_{n-1}\left(x_n,\pi\right)\}  & \label{61} \\
  +&\{\alpha\omega^{lk}_{n-1},\tilde\omega^{ij}_{n-1}\left(x_n,\pi\right)\}-\{\alpha\omega^{ji}_{n-1},\tilde\omega^{lk}_{n-1}\left(x_n,\pi\right)\}  &\notag \\
+ &\{\alpha\omega^{lj}_{n-1},\tilde\omega^{ki}_{n-1}\left(x_n,\pi\right)\}-\{\alpha\omega^{ik}_{n-1},\tilde\omega^{lj}_{n-1}\left(x_n,\pi\right)\}  +{\cal O}\left(\alpha^{n+1}\right)\,.&\notag
\end{align}
And finally, we observe that by (\ref{46}), $\alpha\tilde\omega^{jk}_{n-1}\left(x_n,\pi\right)=\alpha\omega^{jk}_{n-1}+{\cal O}\left(\alpha^{n+1}\right)$, such that 
\begin{eqnarray*}
&&\{\alpha\omega^{li}_{n-1},\tilde\omega^{jk}_{n-1}\left(x_n,\pi\right)\}-\{\alpha\omega^{kj}_{n-1},\tilde\omega^{li}_{n-1}\left(x_n,\pi\right)\}   = \label{62} \\
  &&\alpha\{\omega^{li}_{n-1},\omega^{jk}_{n-1}\}-\alpha\{\omega^{kj}_{n-1},\omega^{li}_{n-1}\} +{\cal O}\left(\alpha^{n+1}\right)={\cal O}\left(\alpha^{n+1}\right)\,.\notag
\end{eqnarray*}
That is, (\ref{61}) is ${\cal O}\left(\alpha^{n+1}\right)$, and the condition (\ref{57}) holds true.
\end{proof}

Since the consistency condition (\ref{56}) for the equation (\ref{53}) is satisfied due to the above Lemma, the solution can be constructed according to the logic of the Lemma \ref{L2}, i.e., taking the symmetrization of the tensors $F^{ijk(n-1)}$ in the last $n$ indices. The corresponding expression is given by
\begin{equation}\label{63}
  \Theta^{ij(n)}=-\frac{1}{n(n+2)}\left(F^{ijl_1l_2...l_n}+F^{ijl_2 l_1...l_n}+...+F^{ijl_n l_1...l_{n-1}}\right).
\end{equation}

In the conclusion of this section we would like to mention some possible applications of the proposed construction. It is remarkable that if in the L$_\infty$ bootstrap programe \cite{BBKL} one selects the quasi-Poisson bracket (\ref{21}) as the initial setup i.e., $\ell_2(f,g)=\{f,g\}_Q$,  for $f,g\in X_0$, and $\ell_1(f)=\partial_if\in X_{-1}$, then
the structure of the coefficient functions (\ref{63}) are absolutely the same as the ones which determines the brackets, $\ell_{n+2}(f,g,A^n)=\Theta^{ij(n)}\,\partial_i\,f\partial_j\,g\,A^n\in X_0$, with $A\in X_{-1}$.  At that the products $\ell_{n+1}(f,A^n)\in X_{-1}$ carry one vector index and that is why are different from the expression $\Gamma^{i(n)}\,\partial_i\,f\,A^n$. However, they can be constructed using the similar logic of the Lemmas \ref{L1} and \ref{L2}. It would be interesting to understand the precise relation between the symplectic realizations and the L$_\infty$ construction proposed in \cite{BBKL}.

Another interesting application is the non-associative deformation quantization. In the same way like like the symplectic realization of the Poisson manifolds were used for the construction of the associative star products representing the quantization of a given Poisson structure \cite{Hammou:2001cc,GraciaBondia:2001ct,Pachol:2015qia}, the obtained in this Section symplectic realizations of the quasi-Poisson structures can be useful as a starting point for the construction of the non-associative star product compatible with topological limit \cite{CoSch,MK1,Herbst}.

\section{Examples}

\subsection{Constant R-flux algebra}

As a first exemple we consider the phase space algebra which appeared in the context of closed string theory in a presence of non-geometric constant $R$-flux \cite{BP,Lust:2010iy,Blumenhagen:2011ph}. This is possibly the most simple physicallly motivated exemple of the quasi-Poisson structure which reads:
\begin{eqnarray}
\{x^I, x^J\}_R=\Theta^{IJ}(x)=\begin{pmatrix}
  \frac{\ell_s^3}{\hbar^2}\, R^{ijk}\, p_k &  \delta^i_j   \\
 - \delta^i_j  & 0 \end{pmatrix}
\qquad \mbox{with}\qquad  x=(x^I) =({\bf x},{\bf p}).\label{rb}
\end{eqnarray}
where $R^{ijk}=R\, \varepsilon^{ijk}$, and $\varepsilon^{123}=+1$. Note that making, $p\to x$ and $x\to-p$, one obtains the monopole algebra corresponding to a constant magnetic charge distribution $\rho(x)=R$, \cite{BaLu}.  

In this case the jacobiator,
\begin{eqnarray}
\Pi^{IJK}=\begin{pmatrix}
  \frac{\ell_s^3}{\hbar^2}\, R^{ijk} &  0   \\
  0 & 0 \end{pmatrix}\,,\label{pirb}
\end{eqnarray}
does not vanish, but is constant. Consequently the only non-vanishing $\Theta^{IJ(n)}$ with $n\geq1$ is $\Theta^{IJK}=-\Pi^{IJK}$. The corresponding Bopp shift reads
\begin{equation}\label{op2}
 x^I=y^I-\mbox{$\frac12$}\, \Theta^{IJ}(y)\, \pi_J\ .
\end{equation}
And finally for the symplectic realization one finds,
\begin{eqnarray}
\{x^I,x^J\}_p &=&  \Theta^{IJ}(x)- \Pi^{IJK}\, \tilde
x_K \ , \notag \\[4pt]
\{x^I,\tilde x_J\}_p &=& \delta^I{}_J + \mbox{$\frac12$}\,
(\partial_J\Theta^{IK})\, \tilde x_K \ , \notag \\[4pt]
\{\tilde x_I,\tilde x_J\}_p &=& 0 \ .
\end{eqnarray}
The letter is also convenient to write in components. The non-vanishing Poisson brackets between the extended phase space coordinates $({\bf x},{\bf p},{\bf\tilde x},{\bf\tilde p})$ are given by
\begin{eqnarray}
\{x^i,p_j\}&=&\{x^i,\tilde x_j\}\ \ =\ \ -\{
\tilde p^i,p_j\}\ \ =\ \ \delta^i{}_j \ , \nn\\[4pt]
\{x^i,x^j\}&=& \mbox{$ \frac{\ell_s^3}{\hbar^2}$}\, R^{ijk}\, \big(p_k-\tilde x_k\big)\ , \nn\\[4pt]	
\{x^i,\tilde p^j\}&=&\{\tilde p^i,x^j\} \ \ = \ \
-\mbox{$ \frac{\ell_s^3}{2\hbar^2}$}\,  R^{ijk}\, \tilde x_k \ .\label{PB1}
\end{eqnarray}

\subsection{Quasi-Poisson structure isomorphic to the Malcev algebra of octonions}

Consider the algebra of classical brackets on the coordinate algebra
$\complex[\vec\xi \ ]$ which is isomorphic to the commutator algebra of imaginary octonions (\ref{oct2}),
\begin{equation}\label{oct4}
\{\xi_A,\xi_B\}_\eta=2\, \eta_{ABC}\, \xi_C \ ,
\end{equation}
where $\vec\xi=(\xi_A)$ with $\xi_A\in\real$, $A=1,\dots,7$. This bracket is bilinear,
antisymmetric and satisfies the Leibniz rule by definition. Note that although the algebra (\ref{oct4}) is isomorphic to the Malcev algebra of octonions (\ref{oct2}) it does not satisfy Malcev identity, \cite{KS17}. Moreover, in \cite{Vassilevich:2018gkl} it was shown more stronger statement: the quasi-Poisson structure satisfies the Malcev identity only if it is Poisson, i.e., satisfies the Jacobi identity. 

Introducing $\sigma^i:=\xi_{i+3}$ for
$i=1,2,3$ and $\sigma^4:= \xi_7$, one may rewrite (\ref{oct4}) in components as
\begin{eqnarray}\label{mp1}
\{\xi_i,\xi_j\}_\eta&=&2\, \varepsilon_{ijk}\, \xi_k \qquad \mbox{and} \qquad
                   \{\sigma^4,\xi_i\}_\eta \ = \ 2\, \sigma^i \ , \\[4pt]
\{\sigma^i,\sigma^j\}_\eta&=&-2\, \varepsilon^{ijk}\, \xi_k \qquad \mbox{and} \qquad
                     \{\sigma^4,\sigma^i\}_\eta \ = \ -2\, \xi_i \ , \nonumber\\[4pt]
\{\sigma^i,\xi_j\}_\eta&=&-2\,( \delta_j^i\, \sigma^4- \varepsilon^{i}{}_{jk}\,
                         \sigma^k) \ . \nonumber
\end{eqnarray}
Using \eqref{oct3} the non-vanishing Jacobiators can be written as
\begin{eqnarray}\label{mp2}
\{\xi_i,\xi_j,\sigma^k\}_\eta&=&-4\, (\varepsilon_{ij}{}^{k}\,
                                \sigma^4+ \delta^{k}_{j}\,
                          \sigma_i- \delta^{k}_{i}\, \sigma_j ) \ ,\\[4pt]
\{\xi_i,\sigma^j,\sigma^k\}_\eta&=&4\, ( \delta_{i}^{j}\, \xi_k-\delta_{i}^{k}\, \xi_j)
                           \ ,\nonumber\\[4pt]
\{\sigma^i,\sigma^j,\sigma^k\}_\eta&=&4\, \varepsilon^{ijk}\, \sigma^4 \ ,\nonumber\\[4pt]
\{\xi_i,\xi_j,\sigma^4\}_\eta&=&4\, \varepsilon_{ijk}\, \sigma^k \ ,\nonumber\\[4pt]
\{\xi_i,\sigma^j,\sigma^4\}_\eta&=&4\, \varepsilon_{i}{}^{jk}\, \xi_k \ ,\nonumber\\[4pt]
\{\sigma^i,\sigma^j,\sigma^4\}_\eta&=&-4\, \varepsilon^{ijk}\, \sigma^k \ . \nonumber
\end{eqnarray}

The calculation of the first orders of the generalized Bopp shift indicates the ansatz:
\begin{equation}
\xi_A=y_A-\eta_{ABC}\,\pi_B\,y_C-\left(y_A\,\pi^2-\pi_A\,y_B\pi_B\right)\,\chi(\pi^2)\,,\label{Bopp-eta}
\end{equation}
where $\chi(\pi^2)$ is a function to be determined. One calculates
\begin{equation}
\{\xi_A,\xi_B\}-2\,\eta_{ABC}\,\xi_C=(y_A\,\pi_B-y_B\,\pi_A)\,\left[2\pi^2\chi^\prime+3\chi-1-\pi^2\chi^2\right]+4\,\eta_{ABCD}\,\pi_C\,y_D\,.
\end{equation}
From where we get an equation on $\chi(t)$:
\begin{equation}
2t\chi^\prime+3\chi-1-t\chi^2=0\,,\qquad \chi(0)=\frac13\,,
\end{equation}
with the solution
\begin{equation}
\chi(t)=-\frac1t\,\left(\sqrt{t}\cot\sqrt{t}-1\right)\,.
\end{equation}

The perturbative calculation for $\omega_{AB}\,(\xi,\tilde\xi)$ suggests the ansatz:
\begin{equation}
\omega_{AB}\,(\xi,\tilde\xi)=2\,\eta_{ABC}\,\xi_C+\eta_{ABCD}\,\tilde\xi_C\,\xi_D\ \phi\left(\tilde\xi^2\right)+\eta_{ABCD}\,\eta_{DEF}\,\tilde\xi_C\,\tilde\xi_E\,\xi_F\ \psi\left(\tilde\xi^2\right)\,,
\end{equation}
where the functions $\phi\left(\tilde\xi^2\right)$ and $\psi\left(\tilde\xi^2\right)$ can be found from the relation
\begin{equation}
4\,\eta_{ABCD}\,\pi_C\,y_D=\eta_{ABCD}\,\tilde\xi_C\,\xi_D\ \phi\left(\tilde\xi^2\right)+\eta_{ABCD}\,\eta_{DEF}\,\tilde\xi_C\,\tilde\xi_E\,\xi_F\ \psi\left(\tilde\xi^2\right)\,.
\end{equation}
Using the expression for the generalized Bopp shift (\ref{Bopp-eta}), as well as the choice $\tilde\xi=\pi$, contraction identity (\ref{epsilon7}) and the antisymmetry of $\eta_{ABCD}$ we find the relations
\begin{eqnarray*}
-\phi+ \psi\,\left(1-\pi^2\,\chi(\pi^2)\right)&=&0\,,\\
\phi\,\left(1-\pi^2\,\chi(\pi^2)\right)+\pi^2\,\psi&=&4\,,
\end{eqnarray*}
implying that
\begin{equation}
\phi=2\,\frac{\sin 2\sqrt{\pi^2}}{\sqrt{\pi^2}}\,,\qquad \mbox{and}\qquad  \psi=4\,\frac{\sin^2\sqrt{\pi^2}}{\pi^2}\,.
\end{equation}

Finally we conclude that the symplectic realization of the quasi-Poisson structure (\ref{oct4}) is given by
\begin{eqnarray}
\{\xi_A,\xi_B\}_p&=&2\,\eta_{ABC}\,\xi_C+2\,\frac{\sin 2\sqrt{\tilde\xi^2}}{\sqrt{\tilde\xi^2}}\,\eta_{ABCD}\,\tilde\xi_C\,\xi_D+4\,\frac{\sin^2\sqrt{\tilde\xi^2}}{\tilde\xi^2}\,\eta_{ABCD}\,\eta_{DEF}\,\tilde\xi_C\,\tilde\xi_E\,\xi_F\,,\nonumber\\
\{\xi_A,\tilde\xi_B\}_p&=&\delta_{AB}+\eta_{ABC}\,\tilde\xi_C+\left(\delta_{AB}\,\tilde\xi^2-\tilde\xi_A\,\tilde\xi_B\right)\,\frac{\sqrt{\tilde\xi^2}\,\cot\sqrt{\tilde\xi^2}-1}{\tilde\xi^2}\,,\nonumber\\
\{\tilde\xi_A,\tilde\xi_B\}_p&=&0\,.\label{sr-eta}
\end{eqnarray}
while the expression for the generalized Bopp shift is
\begin{eqnarray}
\xi_A&=&y_A-\eta_{ABC}\,\pi_B\,y_C+\left(y_A\,\pi^2-\pi_A\,y_B\pi_B\right)\,\frac{\sqrt{\pi^2}\,\cot\sqrt{\pi^2}-1}{\pi^2}\,,\label{bs-oct}\\
\tilde\xi_A&=&\pi_A\,.\nonumber
\end{eqnarray}

Note that the restriction of the restriction of the quasi-Poisson structure (\ref{oct4}) to the three-dimensional space with coordinates $\xi_i$, $i=1,2,3$, results according to (\ref{mp1}), in the Poisson structure $\{\xi_i,\xi_j\}_\varepsilon=2\, \varepsilon_{ijk}\,\xi_k$, isomorphic to the $su(2)$ Lie algebra. Since in three dimensions  the totally antisymmetric tensor $\eta_{ABCD}$ of the rank four automatically vanishes, from (\ref{sr-eta}) one obtains immediately the symplectic realization of the $su(2)$-like Poisson structure,
\begin{eqnarray}
\{\xi_i,\xi_j\}_p&=&2\, \varepsilon_{ijk}\,\xi_k\,,\nonumber\\
\{\xi_i,\tilde\xi_j\}_p&=&\delta_{ij}+\varepsilon_{ijk}\,\tilde\xi_k+\left(\delta_{ij}\,\tilde\xi^2-\tilde\xi_i\,\tilde\xi_j\right)\,\frac{\sqrt{\tilde\xi^2}\,\cot\sqrt{\tilde\xi^2}-1}{\tilde\xi^2}\,,\nonumber\\
\{\tilde\xi_i,\tilde\xi_j\}_p&=&0\,.\label{sr-epsilon}
\end{eqnarray}
The corresponding Bopp shift is given by,
\begin{equation}
\xi_i=y_i-\varepsilon_{ijk}\,\pi_j\,y_k+\left(y_i\,\pi^2-\pi_i\,y_j\pi_j\right)\,\frac{\sqrt{\pi^2}\,\cot\sqrt{\pi^2}-1}{\pi^2}\,,\qquad \tilde\xi_i=\pi_i\,.\label{bs-quat}
\end{equation}
Substituting formally in the above expression the canonical phase space momenta $\pi_i$ by the derivative operators, $-\ii\partial_i$, the canonical phase space coordinates $y_i$ by the multiplication operators $x_i$, and supposing the normal ordering, one recovers the expression for the operators $\hat\xi_i$, giving the polydifferential representation of the algebra, $[\hat\xi_i,\hat\xi_j]=\ii\, \varepsilon_{ijk}\,\hat\xi_k$, which was obtained in \cite{KupVit} and used for the derivation of the $su(2)$-like star product.

\subsection{R-flux in M-theory}

To relate the quasi-Poisson structure (\ref{oct4}) to the constant $R$-flux algebra (\ref{rb}) let us following \cite{GLM} introduce the $7\times7$ matrix 
\begin{equation}\label{oct6}
{\mit\Lambda} = \big({\mit\Lambda}^{AB}\big)= \frac1{2\hbar} \,
\begin{pmatrix}
0 & {\sqrt{\lambda\, \ell_s^3\, R}}\ \unit_3 & 0 \\
0 & 0 & \sqrt{\lambda^3\, \ell_s^3\, R} \\
 -\lambda\, \hbar\ \unit_3 & 0 & 0
\end{pmatrix}
\end{equation}
with $\unit_3$ the $3\times3$ identity matrix. The matrix $\mit\Lambda$
is non-degenerate as long as all parameters are non-zero, but it is
not orthogonal. Using it we define new coordinates
\begin{equation}\label{oct8}
\vec x = \big(x^A\big) = \big(\mbf x,x^4,\mbf p\big) := {\mit\Lambda}\,
\vec\xi= \mbox{$\frac1{2\hbar}$}\, \big(\sqrt{\lambda\,
      \ell_s^{3}\, R}\ \mbf\sigma\,,\,
\sqrt{\lambda^3\, \ell_s^{3}\, R}\ \sigma^4\,,\, -\lambda\,\hbar\ \mbf\xi \big) \ .
\end{equation}
From the classical brackets (\ref{oct4}) one obtains the quasi-Poisson algebra
\begin{equation}\label{oct4a}
\{x^A,x^B\}_\lambda=2\,\lambda^{ABC} \, x^C \qquad \mbox{with} \quad \lambda^{ABC}:= {\mit\Lambda}^{AA^\prime}\,{\mit\Lambda}^{BB^\prime}\,\eta_{A^\prime B^\prime C^\prime}\, {\mit\Lambda}^{-1}_{C^\prime C} \ ,
\end{equation}
which can be written in components as
\begin{eqnarray}\label{oct9}
\{x^i,x^j\}_\lambda &=&\mbox{$\frac{\ell_s^3}{\hbar^2}$}\,
                        R^{4,ijk4}\, p_k \qquad \mbox{and} \qquad
                        \{x^4,x^i\}_\lambda \ = \ \mbox{$\frac{\lambda\, \ell_s^3}{\hbar^2}$}\, R^{4,1234}\, p^i \ , \\[4pt]
\{x^i,p_j\}_\lambda &=&\delta^i_j\,x^4+\lambda\,
                        \varepsilon^i{}_{jk}\, x^k \qquad \mbox{and}
                        \qquad \{x^4,p_i\}_\lambda \ = \ \lambda^2\,x_i \ , \nonumber\\[4pt]
\{p_i,p_j\}_\lambda &=&-\lambda\, \varepsilon_{ijk}\, p^k\ . \nonumber
\end{eqnarray}
Now taking the contraction limit $\lambda\to0$ we observe that the element $x^4$ becomes a central element and can be taken to be identity, while the phase space coordinates $x^i$ and $p_i$ form the algebra of the constant $R$-flux algebra (\ref{rb}). The main conjecture of \cite{GLM} is that the quasi-Poisson brackets (\ref{oct9}) provide the uplift of the string $R$-flux algebra to M-theory. In this sense $\lambda$ plays the role of the M-theory radius.

The corresponding Jacobiators are
\bea\nonumber
\{x^A,x^B,x^C\}_\lambda = -4\, \lambda^{ABCD} \, x^D \qquad \mbox{with} \quad \lambda^{ABCD}:= {\mit\Lambda}^{AA^\prime}\,{\mit\Lambda}^{BB^\prime}\,{\mit\Lambda}^{CC^\prime}\,\eta_{A^\prime B^\prime C^\prime D^\prime}\, {\mit\Lambda}^{-1}_{D^\prime D} \ ,
\eea
with the components
\bea\label{oct9a}
\{x^i,x^j,x^k\}_\lambda &=& \mbox{$\frac{\ell_s^3}{\hbar^2}$}\, R^{4,ijk4} \, x^4 \ , \\[4pt]
\{x^i,x^j,x^4\}_\lambda &=& -\mbox{$\frac{\lambda^2\, \ell_s^3}{\hbar^2}$}\, R^{4,ijk4} \, x_k \ , \nonumber \\[4pt]
\{p_i,x^j,x^k\}_\lambda &=& \mbox{$\frac{\lambda\,\ell_s^3}{\hbar^2}$} \, R^{4,1234}\, \big(\delta^j_i\, p^k-\delta^k_i\, p^j \big) \ , \nonumber \\[4pt]
\{p_i,x^j,x^4\}_\lambda &=& \mbox{$\frac{\lambda^2\, \ell_s^3}{\hbar^2}$}\, R^{4,ijk4}\, p_k \ , \nonumber \\[4pt]
\{p_i,p_j,x^k\}_\lambda &=& - \lambda^2\, \varepsilon_{ij}{}^{k}\, x^4-\lambda\, \big(\delta_j^k\, x_i-\delta_i^k\, x_j \big) \ , \nonumber \\[4pt]
\{p_i,p_j,x^4\}_\lambda &=& \lambda^3\, \varepsilon_{ijk}\, x^k \ , \nonumber \\[4pt]
\{p_i,p_j,p_k\}_\lambda &=& 0 \ . \nonumber
\eea

Making transformation (\ref{oct8}) in (\ref{sr-eta}) and (\ref{bs-oct}) we obtain:
\begin{eqnarray}
\{x_A,x_B\}_p&=&2\,\lambda_{ABC}\,x_C+2\,\frac{\sin 2\sqrt{(\Lambda\tilde x)^2}}{\sqrt{(\Lambda\tilde x)^2}}\,\lambda_{ABCD}\,\tilde x_C\,x_D+4\,\frac{\sin^2\sqrt{(\Lambda\tilde x)^2}}{(\Lambda\tilde x)^2}\,\lambda_{ABCD}\,\lambda_{DEF}\,\tilde x_C\,\tilde x_E\,x_F\,,\nonumber\\
\{x_A,\tilde x_B\}_p&=&\delta_{AB}+\lambda_{ABC}\,\tilde x_C+\left(\delta_{AB}\,(\Lambda\tilde x)^2-\Lambda_{AA^\prime}\,\tilde x_{A^\prime}\,\Lambda_{BB^\prime}\,\tilde x_{B^\prime}\right)\,\frac{\sqrt{(\Lambda\tilde x)^2}\,\cot\sqrt{(\Lambda\tilde x)^2}-1}{(\Lambda\tilde x)^2}\,,\nonumber\\
\{\tilde x_A,\tilde x_B\}_p&=&0\,.\label{sr-RM}
\end{eqnarray}
While the expression for the generalized Bopp shift reads
\begin{eqnarray}
x_A&=&y_A-\lambda_{ABC}\,\pi_B\,y_C+\left(y_A\,(\Lambda\,\pi)^2-\Lambda_{AA^\prime}\,\pi_{A^\prime}\,\Lambda_{BB^\prime}\,\pi_{B^\prime}\right)\,\frac{\sqrt{(\Lambda\,\pi)^2}\,\cot\sqrt{(\Lambda\,\pi)^2}-1}{(\Lambda\,\pi)^2}\,,\nonumber\\
\tilde x_A&=&\pi_A\,.\label{bs-RM}
\end{eqnarray}

Following the logic of \cite{BaLu} that the magnetic monopole algebra can be obtained from the non-geometric $R$-flux quasi-Poisson structure by swapping the phase space coordinates and momenta, in \cite{LMS} it was introduced a magnetic analogue of the M-theory $R$ flux algebra (\ref{oct9}), the smeared Kaluza-Klein monopole. In \cite{KS18} we used the symplectic realization of the monopole algebra to study the classical dynamics and quantization of the electric charge in a field of the magnetic pole distributions. So, it is reasonable to continue this logic and use the symplectic realization (\ref{sr-RM}) for studying the smeared Kaluza-Klein monopole.

\subsection*{Acknowledgments}

I am grateful to Dima Vassilevich, Richard Szabo and Patrizia Vitale for helpful discussions.
This work was supported by the Grant 305372/2016-5 from the Conselho Nacional de Pesquisa (CNPq, Brazil), and the Capes-Humboldt Fellowship 0079/16-2. 

\appendix

\newsection{Octonions}

The algebra $\mathbb{O}$ of octonions is the best known example of a nonassociative but alternative algebra. Every octonion $X\in \mathbb{O}$ can be written in the form
\begin{equation}\label{oct}
X=k^0\, \unit+ k^A\,e_A
\end{equation}
where $k^0,k^A\in\real$, $A=1,\dots,7$, while $\unit$ is the identity element and the imaginary unit octonions $e_A$ satisfy the multiplication law
\begin{equation}\label{oct1}
e_A\, e_B=-\delta_{AB}\, \unit +\eta_{ABC}\, e_C \ .
\end{equation}
Here $\eta_{ABC}$ is a completely antisymmetric tensor of rank three
with nonvanishing values
\bea
\eta_{ABC}=+1 \qquad \mbox{for} \quad ABC = 123 , \ 435, \ 471, \ 516,
\ 572, \ 624, \ 673 \ .
\label{eq:etadef}\eea
Introducing $f_i:=e_{i+3}$ for $i=1,2,3$, the algebra (\ref{oct1}) can be rewritten as
\begin{eqnarray}\label{oct1a}
e_i\, e_j&=&-\delta_{ij}\, \unit +\varepsilon_{ijk}\, e_k\ ,\\[4pt]
e_i\, f_j&=&\delta_{ij}\, e_7-\varepsilon_{ijk}\, f_k\ ,\nonumber\\[4pt]
f_i\, f_j&=&\delta_{ij}\, \unit -\varepsilon_{ijk}\, e_k\ ,\nonumber\\[4pt]
e_7\, e_i&=&f_i \qquad \mbox{and} \qquad f_i\, e_7 \ = \ e_i\ , \nonumber
\end{eqnarray}
which emphasises a subalgebra $\quat$ of quaternions generated by
$e_i$; we will use this component form of the algebra $\mathbb{O}$ frequently in what follows.

The algebra $\mathbb{O}$ is neither commutative nor associative. The commutator algebra of the octonions is given by
\begin{equation}\label{oct2}
[e_A,e_B]:=e_A\, e_B-e_B\, e_A=2\, \eta_{ABC}\, e_C\ ,
\end{equation}
which can be written in components as
\begin{eqnarray}\label{oct2a}
[e_i,e_j]&=&2\, \varepsilon_{ijk}\, e_k \qquad \mbox{and} \qquad [e_7,e_i] \ = \ 2\,
             f_i\ ,\\[4pt]
[f_i,f_j]&=&-2\, \varepsilon_{ijk}\, e_k \qquad \mbox{and} \qquad [e_7,f_i] \ = \
             -2\, e_i\ ,\nonumber\\[4pt]
[e_i,f_j]&=&2\, (\delta_{ij}\, e_7- \varepsilon_{ijk}\, f_k) \ .\nonumber
\end{eqnarray}
The structure constants $\eta_{ABC}$ satisfy the contraction identity
\begin{equation}\label{epsilon7}
\eta_{ABC}\, \eta_{DEC}=\delta_{AD}\, \delta_{BE}-\delta_{AE}\,
\delta_{BD}+\eta_{ABDE} \ ,
\end{equation}
where $\eta_{ABDE}$ is a completely antisymmetric tensor of rank four
with nonvanishing values
$$
\eta_{ABDE}= +1 \qquad \mbox{for} \quad ABDE = 1267, \ 1346, \ 1425, \
1537, \ 3247, \ 3256, \ 4567 \ .
$$
One may also represent the rank four tensor
$\eta_{ABDE}$ as the
dual of the rank three tensor $\eta_{FGH}$ through 
\begin{equation}\label{epsilon8}
\eta_{ABDE}=\mbox{$\frac{1}{6}$}\, \varepsilon_{ABDEFGH}\, \eta_{FGH}
\ ,
\end{equation}
where $\varepsilon_{ABDEFGH}$ is the alternating symbol in seven
dimensions normalized as $\varepsilon_{1234567}=+1$. Together they satisfy the contraction identity
\bea
\eta_{AEF}\, \eta_{ABCD} &=& \delta_{EB}\, \eta_{FCD}-\delta_{FB}\, \eta_{ECD} +\delta_{EC}\, \eta_{BFD}-\delta_{FC}\, \eta_{BED} \nonumber \\ && +\, \delta_{ED}\, \eta_{BCF}-\delta_{FD}\, \eta_{BCE} \ .
\label{eq:eta34}\eea
Taking into account (\ref{epsilon7}), for the Jacobiator we get
\begin{equation}\label{oct3}
[e_A,e_B,e_C]:=\mbox{$\frac{1}3$} \,\left([e_A,[e_B,e_C]]+[e_C,[e_A,e_B]]+[e_B,[e_C,e_A]]\right)=-4\,
\eta_{ABCD}\, e_D \ ,
\end{equation}
and the alternative property of the algebra $\mathbb{O}$ implies that the
Jacobiator is proportional to the associator, i.e.,
$[X,Y,Z]=6\,\big((X\,Y)\, Z-X\,(Y\, Z) \big)$ for any three octonions
$X,Y,Z\in\mathbb{O}$.

\end{document}